\newtheorem{theorem}{Theorem}
\newtheorem{example}[theorem]{Example}
\newtheorem{proposition}[theorem]{Proposition}
\newtheorem{remark}[theorem]{Remark}
\newenvironment{proof}[1][Proof]{\noindent\textbf{#1.} }{\ \rule{0.5em}{0.5em}}
\numberwithin{theorem}{section}
\numberwithin{equation}{section}
\begin{document}

\title{An anisotropic geometrical approach for non-relativistic extended
dynamics}
\author{M. Neagu\footnotemark[1]\;, A. Oan\u{a}\footnotemark[1] \; and V.M.
Red'kov\footnotemark[2] }
\date{}
\maketitle

\begin{abstract}
In this paper we present the distinguished (d-) Riemannian geometry (in the
sense of nonlinear connection, Cartan canonical linear connection, together
with its d-torsions and d-curvatures) for a possible Lagrangian inspired by
optics in non-uniform media. 
%governing the non-relativistic nonlinear optics.
The corresponding equations of motion are also exposed, and some particular
solutions are given. For instance, we obtain as geodesic trajectories some
circular helices (depending on an angular velocity $\omega $), certain
circles situated in some planes (ones are parallel with $xOy $, and other
ones are orthogonal on $xOy$), or some straight lines which are parallel
with the axis $Oz$. All these geometrical geodesics are very specific
because they are completely determined by the non-constant index of
refraction $n(x).$
\end{abstract}

\thanks{}
\thanks{}

%nonlinear optics}

\footnotetext[1]{%
Department of Mathematics and Informatics, University Transilvania of Bra%
\c{s}ov, 50 Iuliu Maniu Blvd., 500091 Bra\c{s}ov, Romania, E-mails:
mircea.neagu@unitbv.ro, alexandru.oana@unitbv.ro} \footnotetext[2]{%
Institute of Physics, National Academy of Sciences of Belarus, Minsk,
Belarus, E-mail: redkov@dragon.bas-net.by} \setcounter{footnote}{3}

\textit{Mathematics Subject Classification (2010):} 53C60, 53C80, 83C10.

\textit{Key words and phrases:} Euclidean metric, anisotropic optics,
nonlinear connection, Cartan linear connection, equations of motion.

\section{Introduction}

% !!!!!!!!!  I do not think that this passage is relevant for problem under consideration

%A well-known problem in theoretical physics, such as the motion of a
%particle in a uniform magnetic field, is studied in \cite{Landau-Lif}. Such
%a problem can be used to describe the behavior of charged particles in
%macroscopic magnetic field in the context of astrophyscis \cite{Bogush-1,
%Bogush-2}.

In the geometrical optics \cite{Landau-Lif-1}, a central role is played by
the Synge-Beil metric \cite{Balan-Synge, Mir-Kaw, Neagu Optics TM}%
\begin{equation}
g_{\alpha \beta }(x,y)=\varphi _{\alpha \beta }(x)+\mathbf{\gamma }%
^{2}y_{\alpha }y_{\beta },  \label{Synge-Beil}
\end{equation}%
where $\mathbf{\gamma }(x)\geq 0$ is a positive smooth function on the
space-time $M^{4}$, and $\varphi _{\alpha \beta }(x)$ is a pseudo-Riemannian
metric on $M^{4}$. One assumes that the four-dimensional manifold $M^{4}$
(which is connected and simply connected) is endowed with the local
coordinates%
\begin{equation*}
\left( x^{\alpha }\right) _{\alpha =\overline{0,3}}=\left(
x^{0}=t,x^{1},x^{2},x^{3}\right) ;
\end{equation*}%
for simplicity we use the system of units where the light velocity is $c=1$.
Obviously, the following rule holds good: $y_{\alpha }=\varphi _{\alpha \mu
}y^{\mu }$. Because the components of $\varphi _{\alpha \beta }(x)$ are
dimensionless, the same are $\mathbf{\gamma }y_{\alpha }$: 
\begin{equation*}
\lbrack \varphi _{\alpha \beta }(x)]=1,\;[\mathbf{\gamma }y_{\alpha }]=1.
\end{equation*}

In such a context, let us restrict our geometric-physical study to the
Euclidean manifold $\mathcal{E}^{3}=\left( \Sigma ^{3},\delta _{ij}\right) $
which has the local coordinates $(x):=\left( x^{i}\right) _{i=\overline{1,3}%
}.$ It follows that the corresponding tangent bundle $T\Sigma ^{3}$ has the
dimension equal to six, and its local coordinates are\footnote{%
In this paper the Latin letters $i,$ $j,$ $k,$ ... run from $1$ to $3$. The
Einstein convention of summation is adopted all over this work.}%
\begin{equation*}
(x,y):=\left( x^{i},y^{i}\right) _{i=\overline{1,3}}=\left( \underset{\text{%
spatial coordinates}}{\underbrace{x^{1},x^{2},x^{3}}},\text{ }\underset{%
\text{spatial direction}}{\underbrace{y^{1},y^{2},y^{3}}}\right) .
\end{equation*}

Let us introduce a metric on $T\Sigma ^{3}$ inspired by optics in a
non-uniform medium (see formula (\ref{Synge-Beil})):%
\begin{equation*}
\mathfrak{g}_{ij}(x,y)=\delta _{ij}+\mathbf{\gamma }^{2}(x)y_{i}y_{j},
\end{equation*}%
where $\delta =\left( \delta _{ij}\right) =$ diag $(1,1,1)$ is the Euclidean
metric, and $y_{i}=\delta _{ir}y^{r}$. Usually, we have $\mathbf{\gamma }%
^{2}(x)=n^{2}(x)-1,$ where $n=n(x)$ is the refractive index of the
non-uniform medium (see \cite{Balan-Synge, Mir-Kaw, Neagu Optics TM}). Using
this spatial metric, below we will examine the special case of a possible
anisotropic non-relativistic dynamical model, 
%for the anisotropic nonlinear optics,
which is governed by the Lagrangian (in this model one considers that the
particle has the mass $m=1$)

\begin{equation}
\begin{array}{lll}
L(x,y) & = & \dfrac{1}{2}\mathfrak{g}_{ij}(x,y)y^{i}y^{j}=\medskip \\ 
& = & \dfrac{1}{2}\left[ \delta _{ij}+\mathbf{\gamma }^{2}y_{i}y_{j}\right]
y^{i}y^{j}=\medskip \\ 
& = & \dfrac{1}{2}\delta _{ij}y^{i}y^{j}+{\dfrac{\mathbf{\gamma }^{2}}{2}}%
||y||^{4},%
\end{array}
\label{Lagragian-ARO}
\end{equation}%
where $||y||^{2}=(y^{1})^{2}+(y^{2})^{2}+(y^{3})^{2}=\delta _{ij}y^{i}y^{j}.$

\begin{remark}
Because the Euclidean metric $\delta _{ij}$ is invariant with respect to the
linear transformations of coordinates induced by the Lie group of orthogonal
transformations%
\begin{equation*}
O(3)=\left\{ A\in M_{3}(\mathbb{R})\text{ }|\text{ }A^{T}\cdot
A=I_{3}\right\} ,
\end{equation*}%
it immediately follows that the Lagrangian (\ref{Lagragian-ARO}) has a
global geometrical character with respect to these orthogonal
transformations.
\end{remark}

Following as a pattern the geometrical ideas from Lagrangian geometry of
tangent bundles \cite{Mir-An} or jet bundles \cite{Bal-Nea-Wiley}, in what
follows we construct the Riemann-Lagrange geometrical objects (the canonical
nonlinear connection, the Cartan canonical linear connection, together with
its d-torsions and d-curvatures) produced by the Lagrangian (\ref%
{Lagragian-ARO}).

\section{Geometrical objects in the non-relativistic extended dynamics}

% optics}

The fundamental metrical distinguished tensor induced by the Lagrangian (\ref%
{Lagragian-ARO}) is given by%
\begin{equation*}
g_{ij}(x,y)=\frac{1}{2}\frac{\partial ^{2}L}{\partial y^{i}\partial y^{j}}%
=\sigma (x,y)\delta _{ij}+2\mathbf{\gamma }^{2}(x)y_{i}y_{j},
\end{equation*}%
where $\sigma (x,y)=(1/2)+\mathbf{\gamma }^{2}(x)||y||^{2}>0.$

\begin{remark}
Because the quadratic form $q(\xi )=\left( y_{i}y_{j}\right) \xi ^{i}\xi
^{j} $ is degenerate and has the signature $(2,1,0)$, while the Euclidean
metric $\delta (\xi )=\delta _{ij}\xi ^{i}\xi ^{j}$ is non-degenerate and
has the signature $(0,3,0)$, we easily deduce that the quadratic form $g(\xi
)=g_{ij}(x,y)\xi ^{i}\xi ^{j}$ has the constant signature $(0,3,0)$. It
follows that it is invariant under orthogonal linear transformation of
coordinates. Consequently, all the subsequent geometrical objects
constructed in this paper will have the same form in any chart of
coordinates induced by a linear transformation of coordinates produced by
the orthogonal group $O(3)$.
\end{remark}

% ???? In this context,
The inverse matrix $g^{-1}=(g^{jk})$ has the entries%
\begin{equation*}
g^{jk}(x,y)=\frac{1}{\sigma (x,y)}\delta ^{jk}-\frac{2\mathbf{\gamma }^{2}(x)%
}{\sigma (x,y)\cdot \tau (x,y)}y^{j}y^{k},
\end{equation*}%
where $\delta ^{jk}=\delta _{jk}$ and $\tau (x,y)=(1/2)+3\mathbf{\gamma }%
^{2}(x)||y||^{2}=\sigma (x,y)+2\mathbf{\gamma }^{2}(x)||y||^{2}.$

\begin{proposition}
For the anisotropic %optical (this is only heuristic  moment)
Lagrangian (\ref{Lagragian-ARO}), the \textit{action}%
\begin{equation*}
\mathbb{E}(x(t))=\int_{a}^{b}L(x(t),y(t))dt,
\end{equation*}%
where $y=dx/dt$, produces on the tangent bundle $T\Sigma ^{3}$ the \textbf{%
canonical nonlinear connection }$N=\left( N_{j}^{i}\right) $, whose
components are%
\begin{equation}
\begin{array}{lll}
N_{j}^{i} & = & \dfrac{2\mathbf{\gamma }}{\sigma }y^{i}y_{j}(\mathbf{\gamma }%
_{s}y^{s})+\medskip \\ 
&  & +\dfrac{\mathbf{\gamma }||y||^{2}}{\sigma }\left[ \delta _{j}^{i}(%
\mathbf{\gamma }_{s}y^{s})+y^{i}\mathbf{\gamma }_{j}-\mathbf{\gamma }%
^{i}y_{j}-\dfrac{2\mathbf{\gamma }^{2}}{\sigma }y^{i}y_{j}(\mathbf{\gamma }%
_{s}y^{s})-\right. \medskip \\ 
&  & \left. -\dfrac{6\mathbf{\gamma }^{2}}{\tau }y^{i}y_{j}(\mathbf{\gamma }%
_{s}y^{s})\right] +\medskip \\ 
&  & +\dfrac{\mathbf{\gamma }^{3}||y||^{4}}{2\sigma }\left[ \dfrac{1}{\sigma 
}\mathbf{\gamma }^{i}y_{j}-\dfrac{3}{\tau }y^{i}\mathbf{\gamma }_{j}-\dfrac{3%
}{\tau }\delta _{j}^{i}(\mathbf{\gamma }_{s}y^{s})+\right. \medskip \\ 
&  & \left. +\dfrac{6\mathbf{\gamma }^{2}}{\sigma \tau ^{2}}\left( \tau
+3\sigma \right) y^{i}y_{j}(\mathbf{\gamma }_{s}y^{s})\right] ,%
\end{array}
\label{nlc-optics}
\end{equation}%
where $\mathbf{\gamma }_{s}=\partial \mathbf{\gamma }/\partial x^{s}$ and $%
\mathbf{\gamma }^{i}=\delta ^{ir}\mathbf{\gamma }_{r}=\mathbf{\gamma }_{i}$.
\end{proposition}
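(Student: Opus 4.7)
The plan is to apply the standard Lagrangian-geometry construction on the tangent bundle: the Euler--Lagrange equations attached to the action $\mathbb{E}$ give a canonical semispray $G^i(x,y)$ via
\begin{equation*}
2G^{i}=\frac{1}{2}\,g^{ir}\!\left( \frac{\partial ^{2}L}{\partial y^{r}\partial x^{s}}y^{s}-\frac{\partial L}{\partial x^{r}}\right) ,
\end{equation*}
and then the canonical nonlinear connection components are obtained by the defining relation
\begin{equation*}
N^{i}_{\ j}=\frac{\partial G^{i}}{\partial y^{j}}.
\end{equation*}
Here $g^{ir}$ is the inverse fundamental tensor already computed right before the statement.

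First I would differentiate $L$ directly from \eqref{Lagragian-ARO}. Since $\|y\|^{2}=\delta_{rs}y^{r}y^{s}$ and $\partial \|y\|^{2}/\partial y^{r}=2y_{r}$, straightforward calculus gives
\begin{equation*}
\frac{\partial L}{\partial x^{r}}=\mathbf{\gamma}\,\mathbf{\gamma}_{r}\|y\|^{4},\qquad \frac{\partial L}{\partial y^{r}}=(1+2\mathbf{\gamma}^{2}\|y\|^{2})\,y_{r},\qquad \frac{\partial ^{2}L}{\partial y^{r}\partial x^{s}}=4\mathbf{\gamma}\,\mathbf{\gamma}_{s}\|y\|^{2}y_{r}.
\end{equation*}
Substituting, the bracket in the definition of $G^{i}$ collapses to
\begin{equation*}
\frac{\partial ^{2}L}{\partial y^{r}\partial x^{s}}y^{s}-\frac{\partial L}{\partial x^{r}}=\mathbf{\gamma}\|y\|^{2}\!\left[ 4y_{r}(\mathbf{\gamma}_{s}y^{s})-\mathbf{\gamma}_{r}\|y\|^{2}\right] .
\end{equation*}

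Next I would contract this quantity with $g^{ir}=\sigma^{-1}\delta^{ir}-2\mathbf{\gamma}^{2}(\sigma\tau)^{-1}y^{i}y^{r}$, using the elementary identities $\delta^{ir}y_{r}=y^{i}$, $\delta^{ir}\mathbf{\gamma}_{r}=\mathbf{\gamma}^{i}$ and $y^{r}y_{r}=\|y\|^{2}$. Collecting the pieces, $G^{i}$ splits into two blocks, one of order $\|y\|^{2}$ (coming from $\sigma^{-1}\delta^{ir}$) and one of order $\|y\|^{4}$ (coming from the rank-one correction of $g^{ir}$). One may already factor $\mathbf{\gamma}$ out front and write
\begin{equation*}
2G^{i}=\frac{\mathbf{\gamma}\|y\|^{2}}{2\sigma}\!\left[ 4y^{i}(\mathbf{\gamma}_{s}y^{s})-\mathbf{\gamma}^{i}\|y\|^{2}\right] -\frac{\mathbf{\gamma}^{3}\|y\|^{4}}{\sigma\tau}y^{i}\!\left[ 4\|y\|^{2}(\mathbf{\gamma}_{s}y^{s})-\|y\|^{2}(\mathbf{\gamma}_{s}y^{s})\right] ,
\end{equation*}
or an equivalent rearrangement that exposes the powers $\mathbf{\gamma}$, $\mathbf{\gamma}\|y\|^{2}$ and $\mathbf{\gamma}^{3}\|y\|^{4}$ that appear in the final formula \eqref{nlc-optics}.

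The main obstacle is the last step: differentiating $G^{i}$ with respect to $y^{j}$. Every factor depends on $y$, so I must repeatedly apply
\begin{equation*}
\frac{\partial y_{r}}{\partial y^{j}}=\delta_{rj},\quad \frac{\partial \|y\|^{2}}{\partial y^{j}}=2y_{j},\quad \frac{\partial \sigma}{\partial y^{j}}=2\mathbf{\gamma}^{2}y_{j},\quad \frac{\partial \tau}{\partial y^{j}}=6\mathbf{\gamma}^{2}y_{j},
\end{equation*}
and the quotient rule. This produces numerous terms; the crucial bookkeeping is to separate them according to the monomials $\delta^{i}_{\ j}(\mathbf{\gamma}_{s}y^{s})$, $y^{i}\mathbf{\gamma}_{j}$, $\mathbf{\gamma}^{i}y_{j}$ and $y^{i}y_{j}(\mathbf{\gamma}_{s}y^{s})$ that constitute the basis of tensorial structures in \eqref{nlc-optics}, and then to gather the scalar coefficients in terms of $\sigma$ and $\tau$. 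Since $\mathbf{\gamma}^{i}=\mathbf{\gamma}_{i}$ and the whole expression is manifestly $O(3)$-equivariant (by the remark following the metric), only these four tensor monomials can appear; the remainder of the argument is a book-keeping exercise in collecting the $1/\sigma$, $1/\tau$, $1/\sigma\tau$ and $1/\tau^{2}$ contributions, which matches exactly the grouping displayed in \eqref{nlc-optics}.
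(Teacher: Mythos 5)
Your proposal follows exactly the paper's route: form the semispray $G^{i}=\frac{g^{ir}}{4}\bigl(\frac{\partial^{2}L}{\partial y^{r}\partial x^{s}}y^{s}-\frac{\partial L}{\partial x^{r}}\bigr)$ from the Euler--Lagrange equations of $\mathbb{E}$ and then set $N^{i}_{\ j}=\partial G^{i}/\partial y^{j}$, and your derivatives of $L$ and the contraction with $g^{ir}$ reproduce the paper's $G^{i}=\frac{\mathbf{\gamma}\|y\|^{2}}{\sigma}y^{i}(\mathbf{\gamma}_{s}y^{s})-\frac{\mathbf{\gamma}\|y\|^{4}}{4\sigma}\mathbf{\gamma}^{i}-\frac{3\mathbf{\gamma}^{3}\|y\|^{4}}{2\sigma\tau}y^{i}(\mathbf{\gamma}_{s}y^{s})$. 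The only blemish is a slip in your displayed second block of $2G^{i}$, where the prefactor $\mathbf{\gamma}^{3}\|y\|^{4}/(\sigma\tau)$ together with the retained $\|y\|^{2}$ inside the bracket double-counts a factor of $\|y\|^{2}$; the term should reduce to $-\frac{3\mathbf{\gamma}^{3}\|y\|^{4}}{\sigma\tau}y^{i}(\mathbf{\gamma}_{s}y^{s})$, after which the remaining differentiation is the same bookkeeping the paper also leaves to the reader.
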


\begin{proof}
For the energy action functional $\mathbb{E}$, the associated Euler-Lagrange
equations can be written in the equivalent form (see \cite{Mir-An,
Bal-Nea-Wiley})%
\begin{equation}
\frac{d^{2}x^{i}}{dt^{2}}+2G^{i}\left( x^{k}(t),y^{k}(t)\right) =0,\qquad
\forall \text{ }i=\overline{1,3},  \label{Euler-Lagrange=0}
\end{equation}%
where the local components%
\begin{equation*}
\begin{array}{lll}
G^{i} & \overset{def}{=} & \dfrac{g^{ir}}{4}\left[ \dfrac{\partial ^{2}L}{%
\partial y^{r}\partial x^{s}}y^{s}-\dfrac{\partial L}{\partial x^{r}}\right]
=\medskip \\ 
& = & \dfrac{\mathbf{\gamma }}{\sigma }||y||^{2}y^{i}(\mathbf{\gamma }%
_{s}y^{s})-\dfrac{\mathbf{\gamma }}{4\sigma }||y||^{4}\mathbf{\gamma }^{i}-%
\dfrac{3\mathbf{\gamma }^{3}}{2\sigma \tau }||y||^{4}y^{i}(\mathbf{\gamma }%
_{s}y^{s})%
\end{array}%
\end{equation*}%
represent, from a geometrical point of view, a \textit{semispray} on the
tangent vector bundle $T\Sigma ^{3}$. The \textit{canonical nonlinear
connection} associated to this semispray has the components (see \cite%
{Mir-An})%
\begin{equation*}
N_{j}^{i}\overset{def}{=}\dfrac{\partial G^{i}}{\partial y^{j}}.
\end{equation*}

In conclusion, by direct computations, we find the expression (\ref%
{nlc-optics}).
\end{proof}

\begin{remark}
In an uniform medium with the constant refractive index%
\begin{equation*}
n(x)=n\in \lbrack 1,\infty ),
\end{equation*}%
we have $\mathbf{\gamma }_{s}=0$. Consequently, in this case we obtain $%
G^{i}=0$ and $N_{j}^{i}=0$.
\end{remark}

The nonlinear connection (\ref{nlc-optics}) produces the dual \textit{%
adapted bases} of d-vector fields 
\begin{equation}
\left\{ \frac{\delta }{\delta x^{i}}=\frac{\partial }{\partial x^{i}}%
-N_{i}^{r}\frac{\partial }{\partial y^{r}}\text{ };\text{ }\dfrac{\partial }{%
\partial y^{i}}\right\} \subset \mathcal{X}(T\Sigma ^{3})  \label{a-b-v}
\end{equation}%
and d-covector fields%
\begin{equation}
\left\{ dx^{i}\text{ };\text{ }\delta y^{i}=dy^{i}+N_{r}^{i}dx^{r}\right\}
\subset \mathcal{X}^{\ast }(T\Sigma ^{3}).  \label{a-b-co}
\end{equation}%
The naturalness of the geometrical adapted bases (\ref{a-b-v}) and (\ref%
{a-b-co}) is coming from the fact that, via a general transformation of
coordinates, their elements transform as\ tensors on $\Sigma ^{3}$.
Therefore, the description of all subsequent geometrical objects on the
tangent space $T\Sigma ^{3}$ (e.g., the Cartan canonical linear connection,
its torsion and curvature) will be done in local adapted components.

For instance, using the notations%
\begin{equation*}
\begin{array}{cccc}
N_{ij}:=N_{i}^{r}\delta _{rj}, & N_{i0}:=N_{ir}y^{r}, & N_{0j}:=N_{rj}y^{r},
& N_{00}:=N_{ij}y^{i}y^{j},%
\end{array}%
\end{equation*}%
by direct local computations, we obtain the following geometrical result:

\begin{proposition}
The Cartan canonical $N$-linear connection produced by the anisotropic 
%optical
Lagrangian (\ref{Lagragian-ARO}) has the adapted local components%
\begin{equation*}
C\Gamma (N)=\left( L_{jk}^{i},\text{ }C_{jk}^{i}\right) ,
\end{equation*}%
where%
\begin{equation*}
L_{jk}^{i}=-\dfrac{\mathbf{\gamma }}{\sigma }\left[ \mathbf{\gamma }\left(
\delta _{j}^{i}N_{k0}+\delta _{k}^{i}N_{j0}-\delta _{jk}\delta
^{ir}N_{r0}\right) +||y||^{2}\left( \delta _{jk}\mathbf{\gamma }^{i}-\delta
_{j}^{i}\mathbf{\gamma }_{k}-\delta _{k}^{i}\mathbf{\gamma }_{j}\right)
+\right.
\end{equation*}%
\begin{equation*}
+\mathbf{\gamma }\left\{ \left( N_{jk}+N_{kj}\right) y^{i}+\left(
N_{k}^{i}-\delta ^{ir}N_{rk}\right) y_{j}+\left( N_{j}^{i}-\delta
^{ir}N_{rj}\right) y_{k}\right\} +
\end{equation*}%
\begin{equation*}
\left. +2\left( \mathbf{\gamma }^{i}y_{j}y_{k}-y^{i}y_{j}\mathbf{\gamma }%
_{k}-y^{i}y_{k}\mathbf{\gamma }_{j}\right) \right] +\dfrac{2\mathbf{\gamma }%
^{3}y^{i}}{\sigma \tau }\left[ \mathbf{\gamma }\left(
y_{j}N_{k0}+y_{k}N_{j0}-\delta _{jk}N_{00}\right) +\right.
\end{equation*}%
\begin{equation*}
+||y||^{2}\left( \delta _{jk}\mathbf{\gamma }_{r}y^{r}-y_{j}\mathbf{\gamma }%
_{k}-y_{k}\mathbf{\gamma }_{j}\right) +2\left( y_{j}y_{k}\mathbf{\gamma }%
_{r}y^{r}-y_{j}\mathbf{\gamma }_{k}||y||^{2}-y_{k}\mathbf{\gamma }%
_{j}||y||^{2}\right) +
\end{equation*}%
\begin{equation*}
\left. +\mathbf{\gamma }\left\{ \left( N_{jk}+N_{kj}\right) ||y||^{2}+\left(
N_{k0}-N_{0k}\right) y_{j}+\left( N_{j0}-N_{0j}\right) y_{k}\right\} \right]
,
\end{equation*}%
\medskip 
\begin{equation*}
C_{jk}^{i}=\frac{\mathbf{\gamma }^{2}}{\sigma }\left( \delta
_{j}^{i}y_{k}+\delta _{k}^{i}y_{j}+\delta _{jk}y^{i}\right) -\frac{2\mathbf{%
\gamma }^{4}}{\sigma \tau }\left( ||y||^{2}\delta _{jk}+2y_{j}y_{k}\right)
y^{i}.
\end{equation*}
\end{proposition}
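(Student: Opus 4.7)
The plan is to apply the standard formulas for the Cartan canonical $N$-linear connection in Lagrange geometry, namely the generalized Christoffel expressions
\[
L^{i}_{jk}=\tfrac{1}{2}g^{is}\!\left(\frac{\delta g_{sj}}{\delta x^{k}}+\frac{\delta g_{sk}}{\delta x^{j}}-\frac{\delta g_{jk}}{\delta x^{s}}\right),\qquad
C^{i}_{jk}=\tfrac{1}{2}g^{is}\!\left(\frac{\partial g_{sj}}{\partial y^{k}}+\frac{\partial g_{sk}}{\partial y^{j}}-\frac{\partial g_{jk}}{\partial y^{s}}\right),
\]
with $\delta/\delta x^{k}=\partial/\partial x^{k}-N^{r}_{k}\partial/\partial y^{r}$ from the adapted basis \eqref{a-b-v} produced by the nonlinear connection of the previous proposition. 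Since all the data are already in hand ($g_{ij}$, $g^{ij}$, and the $N^{i}_{j}$ from \eqref{nlc-optics}), the proof reduces to a careful book-keeping of derivatives.

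I would handle $C^{i}_{jk}$ first, as it is the shorter computation. Differentiating $g_{ij}=\sigma\delta_{ij}+2\gamma^{2}y_{i}y_{j}$ in $y^{k}$ (using $\partial\sigma/\partial y^{k}=2\gamma^{2}y_{k}$ and $\partial y_{i}/\partial y^{k}=\delta_{ik}$) gives
\[
\frac{\partial g_{ij}}{\partial y^{k}}=2\gamma^{2}\!\left(y_{k}\delta_{ij}+y_{j}\delta_{ik}+y_{i}\delta_{jk}\right),
\]
which is totally symmetric in $(i,j,k)$. Hence $C^{i}_{jk}=\gamma^{2}g^{is}(y_{k}\delta_{sj}+y_{j}\delta_{sk}+y_{s}\delta_{jk})$; inserting the formula for $g^{is}$ and using the key contraction $g^{is}y_{s}=y^{i}/\tau$ (a consequence of the identity $\tau-2\gamma^{2}\|y\|^{2}=\sigma$) yields the stated expression, up to a rewriting of $\gamma^{2}\delta_{jk}y^{i}/\tau$ via the same identity.

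For $L^{i}_{jk}$ the plan is to first compute the $\delta$-derivative
\[
\frac{\delta g_{ij}}{\delta x^{k}}=2\gamma\gamma_{k}(\|y\|^{2}\delta_{ij}+2y_{i}y_{j})-2\gamma^{2}(N_{k0}\delta_{ij}+y_{j}N_{ki}+y_{i}N_{kj}),
\]
in which the abbreviations $N_{ij}$, $N_{i0}$, $N_{0j}$, $N_{00}$ of the excerpt appear naturally. Then I would form the symmetrized combination appearing in the Christoffel formula; the piece proportional to $\gamma\gamma_{k}$ (and its $j,r$ permutations) recombines as $(\delta_{jk}\gamma^{i}-\delta^{i}_{j}\gamma_{k}-\delta^{i}_{k}\gamma_{j})\|y\|^{2}+2(\gamma^{i}y_{j}y_{k}-y^{i}y_{j}\gamma_{k}-y^{i}y_{k}\gamma_{j})$, while the piece linear in $N$ recombines as $\gamma[(N_{jk}+N_{kj})y^{i}+(N^{i}_{k}-\delta^{ir}N_{rk})y_{j}+(N^{i}_{j}-\delta^{ir}N_{rj})y_{k}]+\gamma(\delta^{i}_{j}N_{k0}+\delta^{i}_{k}N_{j0}-\delta_{jk}\delta^{ir}N_{r0})$, after contracting with $g^{is}=\sigma^{-1}\delta^{is}-2\gamma^{2}(\sigma\tau)^{-1}y^{i}y^{s}$. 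The $\sigma^{-1}\delta^{is}$ part produces the factor $-\gamma/\sigma$ in front of the first bracket of the stated formula; the $-2\gamma^{2}(\sigma\tau)^{-1}y^{i}y^{s}$ part produces the $2\gamma^{3}y^{i}/(\sigma\tau)$ prefactor in the second bracket, after using $y^{s}\delta_{js}=y_{j}$, $y^{s}y_{j}y_{k}\gamma_{r}y^{r}$-type contractions, and the identities $y^{s}N_{sk}=N_{0k}$, $y^{s}N_{ks}=N_{k0}$, $y^{s}\delta^{ir}N_{rs}=\delta^{ir}N_{r0}$.

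The main obstacle is purely organizational: the nonlinear connection \eqref{nlc-optics} has many summands, so the raw expansion of $L^{i}_{jk}$ contains a large number of terms of the schematic forms $\gamma^{2}N\cdot\delta$, $\gamma^{2}N\cdot y$, $\gamma^{2}y^{i}N\cdot y$, and these must be grouped so that every $N$-contraction collapses into one of the four scalar/tensor objects $N_{ij}, N_{i0}, N_{0j}, N_{00}$. I would keep the two $g^{is}$-contributions separated from the start, exploit the manifest symmetry in $(j,k)$ to cut the bookkeeping in half, and repeatedly use $\tau=\sigma+2\gamma^{2}\|y\|^{2}$ and $g^{is}y_{s}=y^{i}/\tau$ to absorb the $\sigma$-$\tau$ mixed denominators appearing in intermediate steps, at which point the stated closed form emerges.
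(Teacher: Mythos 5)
Your proposal follows exactly the paper's route: it invokes the same generalized Christoffel formulas for $L^i_{jk}$ and $C^i_{jk}$ with the $\delta/\delta x^k$ operators of the adapted basis and then reduces everything to direct computation, which is all the paper's proof does. The additional details you supply (the totally symmetric $\partial g_{ij}/\partial y^k$, the contraction $g^{is}y_s=y^i/\tau$ via $\tau-2\mathbf{\gamma}^2\|y\|^2=\sigma$, the expression for $\delta g_{ij}/\delta x^k$, and the split of $g^{ir}$ into its $\sigma^{-1}\delta^{ir}$ and $-2\mathbf{\gamma}^2(\sigma\tau)^{-1}y^iy^r$ parts) all check out and correctly reproduce the stated components.
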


\begin{proof}
The adapted components of the Cartan canonical connection are given by the
general formulas (see \cite{Mir-An})%
\begin{equation*}
L_{jk}^{i}\overset{def}{=}\frac{g^{ir}}{2}\left( \frac{\delta g_{jr}}{\delta
x^{k}}+\frac{\delta g_{kr}}{\delta x^{j}}-\frac{\delta g_{jk}}{\delta x^{r}}%
\right) ,
\end{equation*}%
\begin{equation*}
C_{jk}^{i}\overset{def}{=}\frac{g^{ir}}{2}\left( \frac{\partial g_{jr}}{%
\partial y^{k}}+\frac{\partial g_{kr}}{\partial y^{j}}-\frac{\partial g_{jk}%
}{\partial y^{r}}\right) =\frac{g^{ir}}{2}\frac{\partial g_{jr}}{\partial
y^{k}}.
\end{equation*}%
Using the derivative operators (\ref{a-b-v}), the direct calculations lead
us to the required results.
\end{proof}

The Cartan canonical $N$-linear connection produced by the anisotropic 
%optical
Lagrangian (\ref{Lagragian-ARO}) is characterized by \textit{three}\textbf{\ 
}effective local torsion d-tensors, namely%
\begin{equation*}
R_{jk}^{i}\overset{def}{=}{\dfrac{\delta N_{j}^{i}}{\delta x^{k}}}-{\dfrac{%
\delta N_{k}^{i}}{\delta x^{j}}},\quad P_{jk}^{i}{\overset{def}{=}{\dfrac{%
\partial N_{j}^{i}}{\partial y^{k}}}-L_{kj}^{i}},\quad C_{jk}^{i},
\end{equation*}%
and \textit{three} effective local curvature d-tensors:%
\begin{equation*}
{R_{jkl}^{i}\overset{def}{=}{\dfrac{\delta L{_{jk}^{i}}}{\delta x^{l}}}-{%
\dfrac{\delta L{_{jl}^{i}}}{\delta x^{k}}}%
+L_{jk}^{r}L_{rl}^{i}-L_{jl}^{r}L_{rk}^{i}+C{_{jr}^{i}}}R_{kl}^{r},
\end{equation*}%
\begin{equation*}
{P_{jkl}^{i}\overset{def}{=}{\dfrac{\partial L{_{jk}^{i}}}{\partial y^{l}}}-C%
{_{jl|k}^{i}}+{C{_{jr}^{i}}}}P_{kl}^{r}{,}
\end{equation*}%
\begin{equation*}
S{_{jkl}^{i}\overset{def}{=}{\dfrac{\partial C{_{jk}^{i}}}{\partial y^{l}}}-{%
\dfrac{\partial C{_{jl}^{i}}}{\partial y^{k}}}%
+C_{jk}^{r}C_{rl}^{i}-C_{jl}^{r}C_{rk}^{i},}
\end{equation*}%
where%
\begin{equation*}
{C{_{jl|k}^{i}}\overset{def}{=}}\frac{\delta {C{_{jl}^{i}}}}{\delta x^{k}}+{C%
{_{jl}^{r}}}L{_{rk}^{i}}-{C{_{rl}^{i}}}L{_{jk}^{r}}-{C{_{jr}^{i}}}L{%
_{lk}^{r}.}
\end{equation*}

At the end of this Section, we would like to point out that in our
anisotropic Lagrangian geometrical theory for non-relativistic extended
dynamics, the classical Riemannian Levi-Civita connection (which in our case
would be attached even to the Euclidean metric $\delta _{ij}$) is replaced
by Cartan canonical connection (associated to the perturbed Lagrangian (\ref%
{Lagragian-ARO})). It is well known that in Finsler-Lagrange geometrical
framework (see \cite{Mir-An}) there are a lot of important linear
distinguished (d-) connections (Cartan, Berwald, Chern-Rund or Hashiguchi,
for instance). However, the use of the Cartan canonical connection is
preferred because it is the single linear d-connection which is a \textit{%
metrical connection}, like the Levi-Civita connection in the classical
Riemannian framework. An important difference between these two connections
is that the Cartan connection is only partial torsion-free because the
Poisson brackets of the distinguished vector fields $\delta /\delta x^{i}$
are not generally equal to zero.

\section{The equations of motion in the anisotropic non-relativistic dynamics%
}

The Euler-Lagrange equations (\ref{Euler-Lagrange=0}), for%
\begin{equation*}
\left( y^{1},y^{2},y^{3}\right) =\left( V^{1},V^{2},V^{3}\right) :=V,\quad
V^{i}=\frac{dx^{i}}{dt},
\end{equation*}%
\begin{equation*}
v^{2}=|V|^{2}=\left( V^{1}\right) ^{2}+\left( V^{2}\right) ^{2}+\left(
V^{3}\right) ^{2},
\end{equation*}%
lead us to the following \textit{anisotropic equations of motion} for a
particle:%
\begin{equation}
\dfrac{dV^{i}}{dt}+\frac{4\mathbf{\gamma }v^{2}\left( 1+3\mathbf{\gamma }%
^{2}v^{2}\right) }{\left( 1+2\mathbf{\gamma }^{2}v^{2}\right) \left( 1+6%
\mathbf{\gamma }^{2}v^{2}\right) }\left( \mathbf{\gamma }_{s}V^{s}V^{i}%
\right) -\frac{\mathbf{\gamma }v^{4}}{1+2\mathbf{\gamma }^{2}v^{2}}\mathbf{%
\gamma }^{i}=0,  \label{eq-of-motin}
\end{equation}%
where $i\in \{1,2,3\}.$

\begin{remark}
\textbf{In a uniform medium with the constant refractive index }$%
n(x)=n_{0}\in \lbrack 1,\infty )$, the above anisotropic equations of motion
simplify as%
\begin{equation*}
\dfrac{dV^{i}}{dt}=0\Leftrightarrow V=\left( V^{1},V^{2},V^{3}\right) =\text{%
\emph{constant}}\Leftrightarrow
\end{equation*}%
\begin{equation*}
\dfrac{dx^{i}}{dt}=V^{i}\Leftrightarrow x(t)=\left(
V^{1}t+x_{0}^{1},V^{2}t+x_{0}^{2},V^{3}t+x_{0}^{3}\right) ,
\end{equation*}%
where $\left( x_{0}^{1},x_{0}^{2},x_{0}^{3}\right) $ = \emph{constant}. It
follows that, in this case, \textbf{the particles are moving only on
straight lines}.
\end{remark}

\subsection{Particular solutions with cylindrical symmetry}

Let us investigate now the case of a non-uniform medium with cylindrical
symmetry. This means that we have $n^{2}=1+f^{2}(\rho )\Leftrightarrow 
\mathbf{\gamma }=f(\rho ),$ where $f:(0,\infty )\rightarrow (0,\infty )$ is
an arbitrary non-constant smooth function, and%
\begin{equation*}
\rho ^{2}=\left( x^{1}\right) ^{2}+\left( x^{2}\right) ^{2}>0.
\end{equation*}

In such a special context, let us search for solutions of (\ref{eq-of-motin}%
) in cylindrical coordinates:%
\begin{equation*}
\begin{array}{ccc}
x^{1}(t)=\rho (t)\cos \phi (t), & x^{2}(t)=\rho (t)\sin \phi (t), & 
x^{3}(t)=\zeta (t),%
\end{array}%
\end{equation*}%
where $\phi \in \lbrack 0,2\pi ]$ and $\zeta \in \mathbb{R}$. By direct
computations, we deduce that the equations (\ref{eq-of-motin}) rewrite as%
\begin{equation}
\begin{array}{c}
\cos \phi \left( \ddot{\rho}\right) -2\sin \phi \left( \dot{\rho}\dot{\phi}%
\right) -\rho \cos \phi \left( \dot{\phi}\right) ^{2}-\rho \sin \phi \left( 
\ddot{\phi}\right) +\medskip \\ 
+\dfrac{4ff^{\prime }v^{2}\left( 1+3f^{2}v^{2}\right) }{\left(
1+2f^{2}v^{2}\right) \left( 1+6f^{2}v^{2}\right) }\left( \cos \phi \left( 
\dot{\rho}\right) ^{2}-\rho \sin \phi \left( \dot{\rho}\dot{\phi}\right)
\right) -\medskip \\ 
-\dfrac{ff^{\prime }v^{4}}{1+2f^{2}v^{2}}\cos \phi =0,%
\end{array}
\label{motion-1}
\end{equation}%
\begin{equation}
\begin{array}{c}
\sin \phi \left( \ddot{\rho}\right) +2\cos \phi \left( \dot{\rho}\dot{\phi}%
\right) -\rho \sin \phi \left( \dot{\phi}\right) ^{2}+\rho \cos \phi \left( 
\ddot{\phi}\right) +\medskip \\ 
+\dfrac{4ff^{\prime }v^{2}\left( 1+3f^{2}v^{2}\right) }{\left(
1+2f^{2}v^{2}\right) \left( 1+6f^{2}v^{2}\right) }\left( \sin \phi \left( 
\dot{\rho}\right) ^{2}+\rho \cos \phi \left( \dot{\rho}\dot{\phi}\right)
\right) -\medskip \\ 
-\dfrac{ff^{\prime }v^{4}}{1+2f^{2}v^{2}}\sin \phi =0,%
\end{array}
\label{motion-2}
\end{equation}%
\begin{equation}
\ddot{\zeta}+\dfrac{4ff^{\prime }v^{2}\left( 1+3f^{2}v^{2}\right) }{\left(
1+2f^{2}v^{2}\right) \left( 1+6f^{2}v^{2}\right) }\left( \dot{\rho}\dot{\zeta%
}\right) =0,  \label{motion-3}
\end{equation}%
where%
\begin{equation*}
v^{2}=\left( \dot{\rho}\right) ^{2}+\rho ^{2}\left( \dot{\phi}\right)
^{2}+\left( \dot{\zeta}\right) ^{2}.
\end{equation*}

In what follows, we will look for some particular solutions of the system of
differential equations (\ref{motion-1} - \ref{motion-3}).

\textbf{Case 1: }Let us consider that $\dot{\rho}=0$ and $v\neq 0$. We will
look for solutions of the form%
\begin{equation*}
c(\zeta )=\left( \rho =\text{constant},\text{ }\phi =\phi (\zeta ),\text{ }%
\zeta \in \mathbb{R}\right) \Rightarrow \dot{c}(\zeta )=\left( \dot{\rho}=0,%
\text{ }\dot{\phi}=\frac{d\phi }{d\zeta },\text{ }\dot{\zeta}=1\right) .
\end{equation*}%
These conditions imply%
\begin{equation*}
v^{2}=\rho ^{2}\left( \frac{d\phi }{d\zeta }\right) ^{2}+1.
\end{equation*}

In this case the anisotropic equations of motion reduce to%
\begin{equation}
\left\{ 
\begin{array}{l}
-\rho \cos \phi \left( \dfrac{d\phi }{d\zeta }\right) ^{2}-\rho \sin \phi
\left( \dfrac{d^{2}\phi }{d\zeta ^{2}}\right) -\dfrac{ff^{\prime }v^{4}}{%
1+2f^{2}v^{2}}\cos \phi =0\medskip \\ 
-\rho \sin \phi \left( \dfrac{d\phi }{d\zeta }\right) ^{2}+\rho \cos \phi
\left( \dfrac{d^{2}\phi }{d\zeta ^{2}}\right) -\dfrac{ff^{\prime }v^{4}}{%
1+2f^{2}v^{2}}\sin \phi =0.%
\end{array}%
\right.  \label{steluta-0}
\end{equation}%
Multiplying the first equation of (\ref{steluta-0}) by $\left( -\sin \phi
\right) $, and the second by $\left( \cos \phi \right) $, by summing we get
the equation:%
\begin{equation}
\rho \cdot \left( \dfrac{d^{2}\phi }{d\zeta ^{2}}\right) =0\Leftrightarrow 
\dfrac{d^{2}\phi }{d\zeta ^{2}}=0\Leftrightarrow \phi (\zeta )=\omega \zeta
+\phi _{0},  \label{S1}
\end{equation}%
where $\omega ,$ $\phi _{0}\in \mathbb{R}$ are arbitrary constants (here $%
\omega $ has the physical meaning of angular velocity). Consequently, the
function (\ref{S1}) is a solution for the equations of motion (\ref%
{steluta-0}) if and only if%
\begin{equation*}
\left\{ 
\begin{array}{l}
\left( \rho \cos \phi \right) \left( \omega ^{2}\right) +\dfrac{ff^{\prime
}v^{4}}{1+2f^{2}v^{2}}\cos \phi =0\medskip \\ 
\left( \rho \sin \phi \right) \left( \omega ^{2}\right) +\dfrac{ff^{\prime
}v^{4}}{1+2f^{2}v^{2}}\sin \phi =0%
\end{array}%
\right. \Leftrightarrow
\end{equation*}%
\begin{equation}
\Leftrightarrow \rho \omega ^{2}+\dfrac{ff^{\prime }v^{4}}{1+2f^{2}v^{2}}=0,
\label{eqc-1}
\end{equation}%
where $v^{2}=\rho ^{2}\omega ^{2}+1.$

\begin{enumerate}
\item If $2f^{2}-1<0$ and 
\begin{equation*}
2f+\rho f^{\prime }\in \left[ -\frac{\left( 2f^{2}-1\right) ^{2}}{4f}%
,0\right) \cup \left( 0,2f\right) ,
\end{equation*}%
then the solutions of the equation (\ref{eqc-1}) are%
\begin{equation*}
\omega _{0}=\pm \frac{1}{\rho }\sqrt{\frac{2f^{2}-1\pm \sqrt{\Delta }}{%
2f\left( 2f+\rho f^{\prime }\right) }-1},
\end{equation*}%
where $\Delta =\left( 1-2f^{2}\right) ^{2}+4f\left( 2f+\rho f^{\prime
}\right) .$

\item If $2f^{2}-1>0$ and $2f+\rho f^{\prime }\in \left( 0,2f\right) ,$ then
the solutions of (\ref{eqc-1}) are%
\begin{equation*}
\omega _{0}=\pm \frac{1}{\rho }\sqrt{\frac{2f^{2}-1+\sqrt{\Delta }}{2f\left(
2f+\rho f^{\prime }\right) }-1}.
\end{equation*}
\end{enumerate}

In conclusion, the solutions of equations of motion (\ref{steluta-0}) are%
\begin{equation*}
c(t)=\left( \rho =\rho _{0}=\text{constant},\text{ }\phi =\omega _{0}t+\phi
_{0},\text{ }\zeta =t\right) \Rightarrow
\end{equation*}%
\begin{equation*}
\Rightarrow 
\begin{array}{ccc}
x^{1}(t)=\rho _{0}\cos \phi (t), & x^{2}(t)=\rho _{0}\sin \phi (t), & 
x^{3}(t)=t.%
\end{array}%
\end{equation*}%
It follows that in this case \textit{the particles move on some \textbf{%
circular helices}}. Note that these circular helices are some very specific
trajectories (non-evident in advance) because the values of the angular
velocity $\omega _{0}$ are completely determined by the initial function $%
f(\rho )$. Solutions corresponding to the sign plus and minus, which are
intimately related to the orientation of rotation (right-handed and
left-handed respectively).

\textbf{Case 2: }Let it be $\dot{\phi}=0$ and $v\neq 0$. We will look for
solutions of the form%
\begin{equation*}
c(\zeta )=\left( \rho =\rho (\zeta ),\text{ }\phi =\text{constant},\text{ }%
\zeta \in \mathbb{R}\right) \Rightarrow \dot{c}(\zeta )=\left( \dot{\rho}=%
\frac{d\rho }{d\zeta },\text{ }\dot{\phi}=0,\text{ }\dot{\zeta}=1\right) .
\end{equation*}%
These conditions imply%
\begin{equation*}
v^{2}=\left( \frac{d\rho }{d\zeta }\right) ^{2}+1.
\end{equation*}

In this case the anisotropic equations of motion reduce to%
\begin{equation}
\left\{ 
\begin{array}{l}
\begin{array}{c}
\cos \phi \left( \dfrac{d^{2}\rho }{d\zeta ^{2}}\right) +\dfrac{4ff^{\prime
}v^{2}\left( 1+3f^{2}v^{2}\right) }{\left( 1+2f^{2}v^{2}\right) \left(
1+6f^{2}v^{2}\right) }\cos \phi \left( \dfrac{d\rho }{d\zeta }\right)
^{2}-\medskip \\ 
-\dfrac{ff^{\prime }v^{4}}{1+2f^{2}v^{2}}\cos \phi =0%
\end{array}%
\medskip \\ 
\begin{array}{c}
\sin \phi \left( \dfrac{d^{2}\rho }{d\zeta ^{2}}\right) +\dfrac{4ff^{\prime
}v^{2}\left( 1+3f^{2}v^{2}\right) }{\left( 1+2f^{2}v^{2}\right) \left(
1+6f^{2}v^{2}\right) }\sin \phi \left( \dfrac{d\rho }{d\zeta }\right)
^{2}-\medskip \\ 
-\dfrac{ff^{\prime }v^{4}}{1+2f^{2}v^{2}}\sin \phi =0%
\end{array}
\\ 
\left( \dfrac{d\rho }{d\zeta }\right) f^{\prime }=0.%
\end{array}%
\right.  \label{steluta}
\end{equation}%
The last equation of (\ref{steluta}) implies $d\rho /d\zeta =0$ or $%
f^{\prime }=0.$ In both situations the solutions of the equations of motion
are%
\begin{equation*}
c(t)=\left( \rho =\rho _{0}=\text{constant},\text{ }\phi =\phi _{0}\in
\lbrack 0,2\pi ],\text{ }\zeta =t\in \mathbb{R}\right) \Rightarrow
\end{equation*}%
\begin{equation*}
\Rightarrow 
\begin{array}{ccc}
x^{1}(t)=\rho _{0}\cos \phi _{0}, & x^{2}(t)=\rho _{0}\sin \phi _{0}, & 
x^{3}(t)=t.%
\end{array}%
\end{equation*}%
Consequently, in this case \textit{the trajectories are \textbf{the
generators of the right circular cylinders}}%
\begin{equation*}
\left( x^{1}\right) ^{2}+\left( x^{2}\right) ^{2}=\rho _{0}^{2},
\end{equation*}%
\textit{where }$\rho _{0}>0$\textit{\ is a solution of the equation }$%
f^{\prime }(\rho )=0$.

\textbf{Case 3: }Let us consider that $\dot{\zeta}=0$ and $v\neq 0$. We will
look for solutions of the form%
\begin{equation*}
c(\phi )=\left( \rho =\rho (\phi ),\text{ }\phi :=t\in \lbrack 0,2\pi ],%
\text{ }\zeta =\text{constant}\right) \Rightarrow
\end{equation*}%
\begin{equation*}
\Rightarrow \dot{c}(\phi )=\left( \dot{\rho}=\frac{d\rho }{d\phi },\text{ }%
\dot{\phi}=1,\text{ }\dot{\zeta}=0\right) .
\end{equation*}%
These conditions imply%
\begin{equation*}
v^{2}=\left( \frac{d\rho }{dt}\right) ^{2}+\rho ^{2}.
\end{equation*}

In this case the anisotropic equations of motion reduce to%
\begin{equation}
\left\{ 
\begin{array}{l}
\begin{array}{c}
\cos t\left( \dfrac{d^{2}\rho }{dt^{2}}\right) -2\sin t\left( \dfrac{d\rho }{%
dt}\right) -\rho \cos t-\dfrac{ff^{\prime }v^{4}}{1+2f^{2}v^{2}}\cos
t+\medskip \\ 
\dfrac{4ff^{\prime }v^{2}\left( 1+3f^{2}v^{2}\right) }{\left(
1+2f^{2}v^{2}\right) \left( 1+6f^{2}v^{2}\right) }\left( \cos t\left( \dfrac{%
d\rho }{dt}\right) ^{2}-\rho \sin t\left( \dfrac{d\rho }{dt}\right) \right)
=0%
\end{array}%
\medskip \\ 
\begin{array}{c}
\sin t\left( \dfrac{d^{2}\rho }{dt^{2}}\right) +2\cos t\left( \dfrac{d\rho }{%
dt}\right) -\rho \sin t-\dfrac{ff^{\prime }v^{4}}{1+2f^{2}v^{2}}\sin
t+\medskip \\ 
\dfrac{4ff^{\prime }v^{2}\left( 1+3f^{2}v^{2}\right) }{\left(
1+2f^{2}v^{2}\right) \left( 1+6f^{2}v^{2}\right) }\left( \sin t\left( \dfrac{%
d\rho }{dt}\right) ^{2}+\rho \cos t\left( \dfrac{d\rho }{dt}\right) \right)
=0.%
\end{array}%
\end{array}%
\right.  \label{steluta-1}
\end{equation}%
Multiplying the first equation of (\ref{steluta-1}) by $\left( -\sin \phi
\right) $, and the second by $\left( \cos \phi \right) $, by summing we get
the equation:%
\begin{equation*}
\left( \dfrac{d\rho }{dt}\right) \left( 1+\dfrac{2\rho ff^{\prime
}v^{2}\left( 1+3f^{2}v^{2}\right) }{\left( 1+2f^{2}v^{2}\right) \left(
1+6f^{2}v^{2}\right) }\right) =0.
\end{equation*}

\begin{enumerate}
\item If $d\rho /dt=0$, then the equations of motion (\ref{steluta-1}) become%
\begin{equation*}
\left\{ 
\begin{array}{l}
\rho \cos t+\dfrac{ff^{\prime }v^{4}}{1+2f^{2}v^{2}}\cos t=0\medskip \\ 
\rho \sin t+\dfrac{ff^{\prime }v^{4}}{1+2f^{2}v^{2}}\sin t=0.%
\end{array}%
\right. \Leftrightarrow \rho +\dfrac{ff^{\prime }v^{4}}{1+2f^{2}v^{2}}%
=0\Leftrightarrow
\end{equation*}%
\begin{equation}
\Leftrightarrow 1+2f^{2}\rho ^{2}+ff^{\prime }\rho ^{3}=0,  \label{eq-2}
\end{equation}%
where $v^{2}=\rho ^{2}$. The corresponding solutions are%
\begin{equation*}
c(t)=\left( \rho =\rho _{0}>0,\text{ }\phi =t\in \lbrack 0,2\pi ],\text{ }%
\zeta =\zeta _{0}\in \mathbb{R}\right) \Rightarrow
\end{equation*}%
\begin{equation*}
\Rightarrow 
\begin{array}{ccc}
x^{1}(t)=\rho _{0}\cos t, & x^{2}(t)=\rho _{0}\sin t, & x^{3}(t)=\zeta _{0}.%
\end{array}%
\end{equation*}%
It follows that in this case \textit{the trajectories are \textbf{circles
situated in planes parallel with the plane }}$xOy$\textbf{, }\textit{\textbf{%
having the centers situated on the axis }}$Oz$\textbf{, }\textit{\textbf{and
the radii }}$\rho _{0}>0$\textbf{, }\textit{\textbf{where }}$\rho _{0}$%
\textbf{\ }\textit{\textbf{is a solution of (\ref{eq-2}).}}

\item If $d\rho /dt\neq 0$, then we get 
\begin{equation}
1+\dfrac{2\rho ff^{\prime }v^{2}\left( 1+3f^{2}v^{2}\right) }{\left(
1+2f^{2}v^{2}\right) \left( 1+6f^{2}v^{2}\right) }=0.  \label{eqc-2}
\end{equation}%
The preceding equation has solution if and only if%
\begin{equation*}
2f+\rho f^{\prime }\in \left( -\frac{1+4\rho ^{2}f^{2}}{2\rho ^{2}f\left(
1+3\rho ^{2}f^{2}\right) },0\right) .
\end{equation*}%
In this case, the solutions of the equation (\ref{eqc-2}) are%
\begin{equation}
\frac{d\rho }{dt}=\pm \sqrt{\frac{-4f^{2}-\rho ff^{\prime }-\sqrt{\Delta
^{\prime }}}{6f^{3}\left( 2f+\rho f^{\prime }\right) }-\rho ^{2}}:=F\left(
\rho \right) ,  \label{eq-3}
\end{equation}%
where $\Delta ^{\prime }=4f^{4}+\rho ^{2}f^{2}\left( f^{\prime }\right)
^{2}+2\rho f^{3}f^{\prime }$. Imposing as a solution of (\ref{eq-3}) to be
also a solution for the system (\ref{steluta-1}), we get%
\begin{equation*}
FF^{\prime }-\rho -\frac{2F^{2}}{\rho }=\frac{ff^{\prime }\left( F^{2}+\rho
^{2}\right) ^{2}}{1+2f^{2}\left( F^{2}+\rho ^{2}\right) }\Rightarrow
\end{equation*}%
\begin{equation*}
\Rightarrow \rho =\text{constant (contradiction!)}.
\end{equation*}%
So, the case $d\rho /dt\neq 0$ does not provide us any solution.
\end{enumerate}

\subsection{Particular solutions with spherical symmetry}

We investigate now the case of a non-uniform medium with spherical symmetry.
This means that we have $n^{2}=1+f^{2}(r)\Leftrightarrow \mathbf{\gamma }%
=f(r),$ where $f:(0,\infty )\rightarrow (0,\infty )$ is an arbitrary
non-constant smooth function, and%
\begin{equation*}
r^{2}=|x|^{2}=\left( x^{1}\right) ^{2}+\left( x^{2}\right) ^{2}+\left(
x^{3}\right) ^{2}>0.
\end{equation*}

\begin{example}
In the study of mirages from heat above warm surfaces, one works with
indices of refraction which have spherical symmetry (see \cite{Minnaer, Stam}%
):%
\begin{equation*}
n(x)=1+\varepsilon e^{-r^{2}/\sigma ^{2}}\Leftrightarrow f(r)=\sqrt{%
2\varepsilon e^{-r^{2}/\sigma ^{2}}+\varepsilon ^{2}e^{-2r^{2}/\sigma ^{2}}},
\end{equation*}%
where $\varepsilon >0$ and $\sigma >0$ are some given constants.
\end{example}

In such a special context, let us search for solutions of (\ref{eq-of-motin}%
) in spherical coordinates:%
\begin{equation*}
x^{1}(t)=r(t)\sin \theta (t)\cos \phi (t),\text{ \ }x^{2}(t)=r(t)\sin \theta
(t)\sin \phi (t),\text{ \ }x^{3}(t)=r(t)\cos \theta (t),
\end{equation*}%
where $\theta \in \lbrack 0,\pi ]$ and $\phi \in \lbrack 0,2\pi ]$. By
direct computations, we deduce that the equations (\ref{eq-of-motin})
rewrite as%
\begin{equation}
\begin{array}{c}
\sin \theta \cos \phi \left( \ddot{r}\right) +r\cos \theta \cos \phi \left( 
\ddot{\theta}\right) -r\sin \theta \sin \phi \left( \ddot{\phi}\right)
+\medskip \\ 
+2\cos \theta \cos \phi \left( \dot{r}\dot{\theta}\right) -2\sin \theta \sin
\phi \left( \dot{r}\dot{\phi}\right) -2r\cos \theta \sin \phi \left( \dot{%
\theta}\dot{\phi}\right) -\medskip \\ 
-r\sin \theta \cos \phi \left( \left( \dot{\theta}\right) ^{2}+\left( \dot{%
\phi}\right) ^{2}\right) +\dfrac{4ff^{\prime }v^{2}\left(
1+3f^{2}v^{2}\right) }{\left( 1+2f^{2}v^{2}\right) \left(
1+6f^{2}v^{2}\right) }\cdot \medskip \\ 
\cdot \left( \sin \theta \cos \phi \left( \dot{r}\right) ^{2}+r\cos \theta
\cos \phi \left( \dot{r}\dot{\theta}\right) -r\sin \theta \sin \phi \left( 
\dot{r}\dot{\phi}\right) \right) -\medskip \\ 
-\dfrac{ff^{\prime }v^{4}}{1+2f^{2}v^{2}}\sin \theta \cos \phi =0,%
\end{array}
\label{motion-1-cyl}
\end{equation}%
\begin{equation}
\begin{array}{c}
\sin \theta \sin \phi \left( \ddot{r}\right) +r\cos \theta \sin \phi \left( 
\ddot{\theta}\right) +r\sin \theta \cos \phi \left( \ddot{\phi}\right)
+\medskip \\ 
+2\cos \theta \sin \phi \left( \dot{r}\dot{\theta}\right) +2\sin \theta \cos
\phi \left( \dot{r}\dot{\phi}\right) +2r\cos \theta \cos \phi \left( \dot{%
\theta}\dot{\phi}\right) -\medskip \\ 
-r\sin \theta \sin \phi \left( \left( \dot{\theta}\right) ^{2}+\left( \dot{%
\phi}\right) ^{2}\right) +\dfrac{4ff^{\prime }v^{2}\left(
1+3f^{2}v^{2}\right) }{\left( 1+2f^{2}v^{2}\right) \left(
1+6f^{2}v^{2}\right) }\cdot \medskip \\ 
\cdot \left( \sin \theta \sin \phi \left( \dot{r}\right) ^{2}+r\cos \theta
\sin \phi \left( \dot{r}\dot{\theta}\right) +r\sin \theta \cos \phi \left( 
\dot{r}\dot{\phi}\right) \right) -\medskip \\ 
-\dfrac{ff^{\prime }v^{4}}{1+2f^{2}v^{2}}\sin \theta \sin \phi =0,%
\end{array}
\label{motion-2-cyl}
\end{equation}%
\begin{equation}
\begin{array}{c}
\cos \theta \left( \ddot{r}\right) -r\sin \theta \left( \ddot{\theta}\right)
-2\sin \theta \left( \dot{r}\dot{\theta}\right) -r\cos \theta \left( \dot{%
\theta}\right) ^{2}+\medskip \\ 
+\dfrac{4ff^{\prime }v^{2}\left( 1+3f^{2}v^{2}\right) }{\left(
1+2f^{2}v^{2}\right) \left( 1+6f^{2}v^{2}\right) }\left( \cos \theta \left( 
\dot{r}\right) ^{2}-r\sin \theta \left( \dot{r}\dot{\theta}\right) \right)
-\medskip \\ 
-\dfrac{ff^{\prime }v^{4}}{1+2f^{2}v^{2}}\cos \theta =0,%
\end{array}
\label{motion-3-cyl}
\end{equation}%
where%
\begin{equation*}
v^{2}=\left( \dot{r}\right) ^{2}+r^{2}\left( \dot{\theta}\right)
^{2}+r^{2}\sin ^{2}\theta \left( \dot{\phi}\right) ^{2}.
\end{equation*}

Let us look for some particular solutions of the DEs system (\ref%
{motion-1-cyl} - \ref{motion-3-cyl}).

\textbf{Case 1: }Let it be $\dot{r}=0$ and $v\neq 0$. We will look for
solutions of the form%
\begin{equation*}
c(\phi )=\left( r=\text{constant},\text{ }\theta =\theta (\phi ),\text{ }%
\phi :=t\in \lbrack 0,2\pi ]\right) \Rightarrow
\end{equation*}%
\begin{equation*}
\Rightarrow \dot{c}(\phi )=\left( \dot{r}=0,\text{ }\dot{\theta}=\frac{%
d\theta }{d\phi },\text{ }\dot{\phi}=1\right) .
\end{equation*}%
These conditions imply%
\begin{equation*}
v^{2}=r^{2}\left( \left( \frac{d\theta }{dt}\right) ^{2}+\sin ^{2}\theta
\right) .
\end{equation*}

In this case the anisotropic equations of motion reduce to%
\begin{equation}
\left\{ 
\begin{array}{l}
\begin{array}{l}
r\cos \theta \cos t\left( \dfrac{d^{2}\theta }{dt^{2}}\right) -2r\cos \theta
\sin t\left( \dfrac{d\theta }{dt}\right) -\medskip \\ 
-r\sin \theta \cos t\left( \left( \dfrac{d\theta }{dt}\right) ^{2}+1\right) -%
\dfrac{ff^{\prime }v^{4}}{1+2f^{2}v^{2}}\sin \theta \cos t=0%
\end{array}%
\medskip \\ 
\begin{array}{l}
r\cos \theta \sin t\left( \dfrac{d^{2}\theta }{dt^{2}}\right) +2r\cos \theta
\cos t\left( \dfrac{d\theta }{dt}\right) -\medskip \\ 
-r\sin \theta \sin t\left( \left( \dfrac{d\theta }{dt}\right) ^{2}+1\right) -%
\dfrac{ff^{\prime }v^{4}}{1+2f^{2}v^{2}}\sin \theta \sin t=0%
\end{array}%
\medskip \\ 
r\sin \theta \left( \dfrac{d^{2}\theta }{dt^{2}}\right) +r\cos \theta \left( 
\dfrac{d\theta }{dt}\right) ^{2}+\dfrac{ff^{\prime }v^{4}}{1+2f^{2}v^{2}}%
\cos \theta =0.%
\end{array}%
\right.  \label{em-1}
\end{equation}%
Multiplying the first equation of (\ref{em-1}) by $\left( -\sin t\right) $,
and the second by $\left( \cos t\right) $, by summing we get the equation:%
\begin{equation*}
2r\cos \theta \left( \dfrac{d\theta }{dt}\right) =0.
\end{equation*}

\begin{enumerate}
\item If $\cos \theta =0\Leftrightarrow \theta =\pi /2$, then the equations
of motion (\ref{em-1}) become%
\begin{equation*}
\left\{ 
\begin{array}{l}
r\sin \theta \cos t+\dfrac{ff^{\prime }v^{4}}{1+2f^{2}v^{2}}\sin \theta \cos
t=0\medskip \\ 
r\sin \theta \sin t+\dfrac{ff^{\prime }v^{4}}{1+2f^{2}v^{2}}\sin \theta \sin
t=0,%
\end{array}%
\right. \Leftrightarrow
\end{equation*}%
\begin{equation}
\Leftrightarrow 1+\dfrac{ff^{\prime }r^{3}}{1+2f^{2}r^{2}}=0,  \label{eqs-1}
\end{equation}%
where $v^{2}=r^{2}$. It follows that in this case the solutions of the
anisotropic equations of motion are%
\begin{equation*}
c(t)=\left( r=r_{0}=\text{constant},\text{ }\theta =\frac{\pi }{2},\text{ }%
\phi =t\right) \Rightarrow
\end{equation*}%
\begin{equation*}
\Rightarrow 
\begin{array}{ccc}
x^{1}(t)=r_{0}\cos t, & x^{2}(t)=r_{0}\sin t, & x^{3}(t)=0,%
\end{array}%
\end{equation*}%
where $r_{0}>0$\ is a solution of the equation (\ref{eqs-1}). It follows
that in this case \textit{the particles move on the \textbf{circles situated
in the plane }}$xOy$\textbf{, }\textit{\textbf{which have the centers in
origin and the radii equal to the roots of the equation (\ref{eqs-1})}}%
\textbf{. }We recall that in the case of cylinder solutions we obtained some 
\textit{circles situated in planes parallel with the plane }$xOy$\textbf{,}
having the centers situated on the axis $Oz$, and the radii $\rho _{0}>0$,
where $\rho _{0}$\ is a solution of the same equation.

\item If we have $d\theta /dt=0$, then the anisotropic equations of motion
become%
\begin{equation*}
\left\{ 
\begin{array}{l}
r\sin \theta \cos t+\dfrac{ff^{\prime }v^{4}}{1+2f^{2}v^{2}}\sin \theta \cos
t=0\medskip \\ 
r\sin \theta \sin t+\dfrac{ff^{\prime }v^{4}}{1+2f^{2}v^{2}}\sin \theta \sin
t=0\medskip \\ 
\dfrac{ff^{\prime }v^{4}}{1+2f^{2}v^{2}}\cos \theta =0.%
\end{array}%
\right.
\end{equation*}%
This system implies $\theta =\pi /2$ (we recover then the above Case 1.) or $%
f^{\prime }=0$. So, the new corresponding solutions are%
\begin{equation*}
c(t)=\left( r=r_{0}>0,\text{ }\theta =\theta _{0}\in \lbrack 0,\pi
]\backslash \{\pi /2\},\text{ }\phi =t\right) \Rightarrow
\end{equation*}%
\begin{equation*}
\Rightarrow 
\begin{array}{ccc}
x^{1}(t)=\left( r_{0}\sin \theta _{0}\right) \cos t, & x^{2}(t)=\left(
r_{0}\sin \theta _{0}\right) \sin t, & x^{3}(t)=r_{0}\cos \theta _{0}.%
\end{array}%
\end{equation*}%
\textit{These trajectories are \textbf{circles situated in planes parallel
with the plane }}$xOy$\textbf{, }\textit{\textbf{having the centers situated
on the axis }}$Oz$\textbf{, }\textit{\textbf{and the radii }}$r_{0}\sin
\theta _{0}$\textbf{, }\textit{\textbf{where }}$r_{0}>0$\textbf{\ }\textit{%
\textbf{is a solution of the equation }}$f^{\prime }(r)=0$. By comparing, we
recall that in the case of cylinder solutions we obtained \textit{the
generators of the right circular cylinders}%
\begin{equation*}
\left( x^{1}\right) ^{2}+\left( x^{2}\right) ^{2}=\rho _{0}^{2},
\end{equation*}%
where $\rho _{0}>0$\ is a solution of the same equation.
\end{enumerate}

\textbf{Case 2: }Let us consider that $\dot{\theta}=0$ and $v\neq 0$. We
will look for solutions of the form%
\begin{equation*}
c(\phi )=\left( r=r(\phi ),\text{ }\theta =\text{constant},\text{ }\phi
:=t\in \lbrack 0,2\pi ]\right) \Rightarrow
\end{equation*}%
\begin{equation*}
\Rightarrow \dot{c}(\phi )=\left( \dot{r}=\frac{dr}{d\phi },\text{ }\dot{%
\theta}=0,\text{ }\dot{\phi}=1\right) .
\end{equation*}%
These conditions imply%
\begin{equation*}
v^{2}=\left( \frac{dr}{dt}\right) ^{2}+r^{2}\sin ^{2}\theta .
\end{equation*}

In this case the anisotropic equations of motion reduce to%
\begin{equation*}
\begin{array}{c}
\sin \theta \cos t\left( \dfrac{d^{2}r}{dt^{2}}\right) -2\sin \theta \sin
t\left( \dfrac{dr}{dt}\right) -r\sin \theta \cos t+\medskip \\ 
+\dfrac{4ff^{\prime }v^{2}\left( 1+3f^{2}v^{2}\right) }{\left(
1+2f^{2}v^{2}\right) \left( 1+6f^{2}v^{2}\right) }\left( \sin \theta \cos
t\left( \dfrac{dr}{dt}\right) ^{2}-r\sin \theta \sin t\left( \dfrac{dr}{dt}%
\right) \right) -\medskip \\ 
-\dfrac{ff^{\prime }v^{4}}{1+2f^{2}v^{2}}\sin \theta \cos t=0,%
\end{array}%
\end{equation*}%
\begin{equation*}
\begin{array}{c}
\sin \theta \sin t\left( \dfrac{d^{2}r}{dt^{2}}\right) +2\sin \theta \cos
t\left( \dfrac{dr}{dt}\right) -r\sin \theta \sin t+\medskip \\ 
+\dfrac{4ff^{\prime }v^{2}\left( 1+3f^{2}v^{2}\right) }{\left(
1+2f^{2}v^{2}\right) \left( 1+6f^{2}v^{2}\right) }\left( \sin \theta \sin
t\left( \dfrac{dr}{dt}\right) ^{2}+r\sin \theta \cos t\left( \dfrac{dr}{dt}%
\right) \right) -\medskip \\ 
-\dfrac{ff^{\prime }v^{4}}{1+2f^{2}v^{2}}\sin \theta \sin t=0,%
\end{array}%
\end{equation*}%
\begin{equation*}
\begin{array}{c}
\cos \theta \left( \dfrac{d^{2}r}{dt^{2}}\right) +\dfrac{4ff^{\prime
}v^{2}\left( 1+3f^{2}v^{2}\right) }{\left( 1+2f^{2}v^{2}\right) \left(
1+6f^{2}v^{2}\right) }\cos \theta \left( \dfrac{dr}{dt}\right) ^{2}-\medskip
\\ 
-\dfrac{ff^{\prime }v^{4}}{1+2f^{2}v^{2}}\cos \theta =0.%
\end{array}%
\end{equation*}

\begin{enumerate}
\item If $\cos \theta =0\Leftrightarrow \theta =\pi /2$, then the equations
of motion become%
\begin{equation}
\begin{array}{c}
\cos t\left( \dfrac{d^{2}r}{dt^{2}}\right) -2\sin t\left( \dfrac{dr}{dt}%
\right) -r\cos t-\dfrac{ff^{\prime }v^{4}}{1+2f^{2}v^{2}}\cos t+\medskip \\ 
+\dfrac{4ff^{\prime }v^{2}\left( 1+3f^{2}v^{2}\right) }{\left(
1+2f^{2}v^{2}\right) \left( 1+6f^{2}v^{2}\right) }\cdot \medskip \\ 
\cdot \left( \cos t\left( \dfrac{dr}{dt}\right) ^{2}-r\sin t\left( \dfrac{dr%
}{dt}\right) \right) =0,%
\end{array}
\label{eqms-1}
\end{equation}%
\begin{equation}
\begin{array}{c}
\sin t\left( \dfrac{d^{2}r}{dt^{2}}\right) +2\cos t\left( \dfrac{dr}{dt}%
\right) -r\sin t-\dfrac{ff^{\prime }v^{4}}{1+2f^{2}v^{2}}\sin t+\medskip \\ 
+\dfrac{4ff^{\prime }v^{2}\left( 1+3f^{2}v^{2}\right) }{\left(
1+2f^{2}v^{2}\right) \left( 1+6f^{2}v^{2}\right) }\cdot \medskip \\ 
\cdot \left( \sin t\left( \dfrac{dr}{dt}\right) ^{2}+r\cos t\left( \dfrac{dr%
}{dt}\right) \right) =0,%
\end{array}
\label{eqms-2}
\end{equation}%
where%
\begin{equation*}
v^{2}=\left( \frac{dr}{dt}\right) ^{2}+r^{2}.
\end{equation*}%
Multiplying the first equation by $\left( -\sin t\right) $, and the second
by $\left( \cos t\right) $, by summing we get the equation:%
\begin{equation}
\left( \dfrac{dr}{dt}\right) \left( 1+\dfrac{2rff^{\prime }v^{2}\left(
1+3f^{2}v^{2}\right) }{\left( 1+2f^{2}v^{2}\right) \left(
1+6f^{2}v^{2}\right) }\right) =0.  \label{eqs-last}
\end{equation}

\begin{enumerate}
\item If $dr/dt=0$, then the equations of motion become%
\begin{equation*}
\left\{ 
\begin{array}{l}
r\cos t+\dfrac{ff^{\prime }v^{4}}{1+2f^{2}v^{2}}\cos t=0\medskip \\ 
r\sin t+\dfrac{ff^{\prime }v^{4}}{1+2f^{2}v^{2}}\sin t=0%
\end{array}%
\right. \Leftrightarrow
\end{equation*}%
\begin{equation*}
\Leftrightarrow r+\dfrac{ff^{\prime }v^{4}}{1+2f^{2}v^{2}}=0\Leftrightarrow
\end{equation*}%
\begin{equation}
\Leftrightarrow 1+2f^{2}r^{2}+ff^{\prime }r^{3}=0,  \label{eqs-2}
\end{equation}%
where $v^{2}=r^{2}$. The corresponding solutions are%
\begin{equation*}
c(t)=\left( r=r_{0}>0,\text{ }\theta =\frac{\pi }{2},\text{ }\phi =t\right)
\Rightarrow
\end{equation*}%
\begin{equation*}
\Rightarrow 
\begin{array}{ccc}
x^{1}(t)=r_{0}\cos t, & x^{2}(t)=r_{0}\sin t, & x^{3}(t)=0.%
\end{array}%
\end{equation*}%
It follows that in this case \textit{the particles move on \textbf{circles
situated in the plane }}$xOy$\textbf{, }\textit{\textbf{having the centers
in origin}}\textbf{, }\textit{\textbf{and the radii }}$\rho _{0}>0$\textbf{, 
}\textit{\textbf{where }}$\rho _{0}$\textbf{\ }\textit{\textbf{is a solution
of the equation (\ref{eqs-2}). }}Note that these trajectories also appear in
the case of cylindric solutions (see again the preceding Case 3. // Subcase
1.).

\item If $dr/dt\neq 0$, then we get 
\begin{equation*}
1+\dfrac{2rff^{\prime }v^{2}\left( 1+3f^{2}v^{2}\right) }{\left(
1+2f^{2}v^{2}\right) \left( 1+6f^{2}v^{2}\right) }=0.
\end{equation*}%
The preceding equation has solution if and only if%
\begin{equation*}
2f+rf^{\prime }\in \left( -\frac{1+4r^{2}f^{2}}{2r^{2}f\left(
1+3r^{2}f^{2}\right) },0\right) ,
\end{equation*}%
and the solutions are%
\begin{equation}
\frac{dr}{dt}=\pm \sqrt{\frac{-4f^{2}-rff^{\prime }-\sqrt{\Delta ^{\prime }}%
}{6f^{3}\left( 2f+rf^{\prime }\right) }-r^{2}}:=F\left( r\right) ,
\label{eqs-3}
\end{equation}%
where $\Delta ^{\prime }=4f^{4}+r^{2}f^{2}\left( f^{\prime }\right)
^{2}+2rf^{3}f^{\prime }$. Imposing as a solution of (\ref{eqs-3}) to be also
a solution for equations (\ref{eqms-1}) and (\ref{eqms-2}), we get%
\begin{equation*}
FF^{\prime }-r-\frac{2F^{2}}{r}=\frac{ff^{\prime }\left( F^{2}+r^{2}\right)
^{2}}{1+2f^{2}\left( F^{2}+r^{2}\right) }\Rightarrow
\end{equation*}%
\begin{equation*}
\Rightarrow r=\text{constant (contradiction!)}.
\end{equation*}%
So, the subcase $dr/dt\neq 0$ does not provide us any solution.
\end{enumerate}

\item If $\cos \theta \neq 0$ and $\sin \theta \neq 0$, then the equations
of motion reduce to the equations (\ref{eqms-1}), (\ref{eqms-2}) and%
\begin{equation}
\begin{array}{c}
\left( \dfrac{d^{2}r}{dt^{2}}\right) +\dfrac{4ff^{\prime }v^{2}\left(
1+3f^{2}v^{2}\right) }{\left( 1+2f^{2}v^{2}\right) \left(
1+6f^{2}v^{2}\right) }\left( \dfrac{dr}{dt}\right) ^{2}-\medskip \\ 
-\dfrac{ff^{\prime }v^{4}}{1+2f^{2}v^{2}}=0.%
\end{array}
\label{eqms-theta}
\end{equation}

Using the equation (\ref{eqms-theta}), we deduce that the equations (\ref%
{eqms-1}) and (\ref{eqms-2}) simplify as%
\begin{equation*}
-2\sin t\left( \dfrac{dr}{dt}\right) -r\cos t-\dfrac{4rff^{\prime
}v^{2}\left( 1+3f^{2}v^{2}\right) }{\left( 1+2f^{2}v^{2}\right) \left(
1+6f^{2}v^{2}\right) }\sin t\left( \dfrac{dr}{dt}\right) =0,
\end{equation*}%
\begin{equation*}
2\cos t\left( \dfrac{dr}{dt}\right) -r\sin t+\dfrac{4rff^{\prime
}v^{2}\left( 1+3f^{2}v^{2}\right) }{\left( 1+2f^{2}v^{2}\right) \left(
1+6f^{2}v^{2}\right) }\cos t\left( \dfrac{dr}{dt}\right) =0.
\end{equation*}%
Multiplying the first equation from above by $\left( \cos t\right) $, and
the second by $\left( \sin t\right) $, we deduce that this system of
equations has no solution.

\item If $\sin \theta =0$, then the equations of motion are equivalent with
differential equation (\ref{eqms-theta}), where%
\begin{equation*}
v^{2}=\left( \frac{dr}{dt}\right) ^{2}.
\end{equation*}%
To integrate the equation (\ref{eqms-theta}) seems to be very difficult but
it is obvious that \textit{the solutions of the equations of motion are 
\textbf{segments of the axis }}$Oz$, whose lengths are determined by the
images of the solutions $r(t)$ of the equation (\ref{eqms-theta}).
\end{enumerate}

\textbf{Case 3: }Let us consider that $\dot{\phi}=0$ and $v\neq 0$. We will
look for solutions of the form%
\begin{equation*}
c(\theta )=\left( r=r(\theta ),\text{ }\theta :=t\in \lbrack 0,\pi ],\text{ }%
\phi =\text{constant}\right) \Rightarrow
\end{equation*}%
\begin{equation*}
\Rightarrow \dot{c}(\theta )=\left( \dot{r}=\frac{dr}{d\theta },\text{ }\dot{%
\theta}=1,\text{ }\dot{\phi}=0\right) .
\end{equation*}%
These conditions imply%
\begin{equation*}
v^{2}=\left( \frac{dr}{dt}\right) ^{2}+r^{2}.
\end{equation*}

In this case the anisotropic equations of motion reduce to%
\begin{equation*}
\begin{array}{c}
\sin t\cos \phi \left( \dfrac{d^{2}r}{dt^{2}}\right) +2\cos t\cos \phi
\left( \dfrac{dr}{dt}\right) -r\sin t\cos \phi +\medskip \\ 
+\dfrac{4ff^{\prime }v^{2}\left( 1+3f^{2}v^{2}\right) }{\left(
1+2f^{2}v^{2}\right) \left( 1+6f^{2}v^{2}\right) }\left( \sin t\cos \phi
\left( \dfrac{dr}{dt}\right) ^{2}+r\cos t\cos \phi \left( \dfrac{dr}{dt}%
\right) \right) -\medskip \\ 
-\dfrac{ff^{\prime }v^{4}}{1+2f^{2}v^{2}}\sin t\cos \phi =0,%
\end{array}%
\end{equation*}%
\begin{equation*}
\begin{array}{c}
\sin t\sin \phi \left( \dfrac{d^{2}r}{dt^{2}}\right) +2\cos t\sin \phi
\left( \dfrac{dr}{dt}\right) -r\sin t\sin \phi +\medskip \\ 
+\dfrac{4ff^{\prime }v^{2}\left( 1+3f^{2}v^{2}\right) }{\left(
1+2f^{2}v^{2}\right) \left( 1+6f^{2}v^{2}\right) }\left( \sin t\sin \phi
\left( \dfrac{dr}{dt}\right) ^{2}+r\cos t\sin \phi \left( \dfrac{dr}{dt}%
\right) \right) -\medskip \\ 
-\dfrac{ff^{\prime }v^{4}}{1+2f^{2}v^{2}}\sin t\sin \phi =0,%
\end{array}%
\end{equation*}%
\begin{equation*}
\begin{array}{c}
\cos t\left( \dfrac{d^{2}r}{dt^{2}}\right) -2\sin t\left( \dfrac{dr}{dt}%
\right) -r\cos t+\medskip \\ 
+\dfrac{4ff^{\prime }v^{2}\left( 1+3f^{2}v^{2}\right) }{\left(
1+2f^{2}v^{2}\right) \left( 1+6f^{2}v^{2}\right) }\left( \cos t\left( \dfrac{%
dr}{dt}\right) ^{2}-r\sin t\left( \dfrac{dr}{dt}\right) \right) -\medskip \\ 
-\dfrac{ff^{\prime }v^{4}}{1+2f^{2}v^{2}}\cos t=0.%
\end{array}%
\end{equation*}

Multiplying the first equation by $\left( \cos \phi \cos t\right) $, the
second by $\left( \sin \phi \cos t\right) $, and the third by $\left( -\sin
t\right) $, by summing we get again the equation (\ref{eqs-last}). But, this
case was previously treated, and the corresponding solutions are%
\begin{equation*}
c(t)=\left( r=r_{0}>0,\text{ }\theta =t,\text{ }\phi =\phi _{0}\in \lbrack
0,2\pi ]\right) \Rightarrow
\end{equation*}%
\begin{equation*}
\Rightarrow 
\begin{array}{ccc}
x^{1}(t)=\left( r_{0}\cos \phi _{0}\right) \sin t, & x^{2}(t)=\left(
r_{0}\sin \phi _{0}\right) \sin t, & x^{3}(t)=r_{0}\cos t,%
\end{array}%
\end{equation*}%
where $r_{0}>0$\textbf{\ }is a solution of the equation (\ref{eqs-2}). It
follows that in this case \textit{the trajectories are the following \textbf{%
circles:}}%
\begin{equation*}
\left\{ 
\begin{array}{l}
\left( x^{1}\right) ^{2}+\left( x^{2}\right) ^{2}+\left( x^{3}\right)
^{2}=r_{0}^{2}\medskip \\ 
\left( \sin \phi _{0}\right) x^{1}-\left( \cos \phi _{0}\right) x^{2}=0.%
\end{array}%
\right.
\end{equation*}

\section{Conclusion}

At the end of this paper, we consider it is important to underline some
geometrical differences between the classical non-relativistic dynamics
(determined only by the Euclidian metric $\delta _{ij}$) and the present
anisotropic non-relativistic extended dynamics (determined by the perturbed
Lagrangian (\ref{Lagragian-ARO})). For instance, we recall that in the first
case we have only a Riemannian curvature (determined by the Levi-Civita
connection), while in the second case we have more Lagrangian curvatures
(determined by the more complicated Cartan connection). Moreover, in the
first case we live in a flat space (its Riemannian curvature is zero), while
in the second case we work in a curved spaces (its Lagrangian curvatures
being non-zero). At the same time, note that the corresponding geodesics are
in these different cases:

\begin{enumerate}
\item any straight lines in space, in the first case;

\item only certain circular helices or circles situated in some specific
planes, together with certain specific straight lines which are parallel
with the axis $Oz$, in the second case. It is also important to note that,
in this second anisotropic situation, the preceding geometrical geodesics do
not represent the general solutions of the geodesic equations, but only some
particular cases of them. To find the general solution of these geodesic
equations can be considered as an open problem for a research study in this
geometric-physical domain.
\end{enumerate}

\noindent\textbf{Acknowledgements. }The present work was developed under the
auspices of Grant 1196/2012 - BRFFR-RA F12RA-002, within the cooperation
framework between Romanian Academy and Belarusian Republican Foundation for
Fundamental Research.

\end{document}